\newcommand{\br}[1]{\langle #1\rangle}
\newcommand{\Ra}{\Rightarrow}
\newcommand{\For}{\textbf{F}}
\newcommand{\Per}{\textbf{P}}
\newcommand{\RC}{\textbf{RC}}
\newcommand{\SC}{\textbf{SC}}
\newcommand{\CF}{\textbf{CF}}
\newcommand{\CS}{\textbf{CS}}
\newcommand{\Percd}{\textbf{CD(P)}}
\newcommand{\aalph}{\textit{alph}}
\newcommand{\per}{\textit{Per}}
\newcommand{\for}{\textit{For}}
\newcommand{\eps}{\lambda}
\newtheorem{theorem}{Theorem}
\newtheorem{lemma}[theorem]{Lemma}
\newtheorem{corollary}[theorem]{Corollary}
\theoremstyle{definition}
\newtheorem{definition}{Definition}
\date{}
\begin{document}
\title{Comparison of Two Context-Free Rewriting Systems with Simple Context-Checking Mechanisms}
\author{Tom\'{a}\v{s} Masopust\\
  \small Institute of Mathematics of the Czech Academy of Sciences\\[-0.8ex]
  \small {\v Z}i{\v z}kova 22, 616 62 Brno, Czech Republic\\
  \small \texttt{masopust@ipm.cz}
}

\maketitle
\begin{abstract}
  This paper solves an open problem concerning the generative power of nonerasing context-free rewriting systems using a simple mechanism for checking for context dependencies, in the literature known as semi-conditional grammars of degree $(1,1)$. In these grammars, two nonterminal symbols are attached to each context-free production, and such a production is applicable if one of the two attached symbols occurs in the current sentential form, while the other does not. Specifically, this paper demonstrates that the family of languages generated by semi-conditional grammars of degree $(1,1)$ coincides with the family of random context languages. In addition, it shows that the normal form proved by Mayer for random context grammars with erasing productions holds for random context grammars without erasing productions, too. It also discusses two possible definitions of the relation of the direct derivation step used in the literature.

%   \keywords{Context-free grammar \and Random context grammar \and Semi-conditional grammar \and Erasing production \and Normal Form \and Generative power}

%   \subclass{MSC 68Q42 \and MSC 68Q45}
\end{abstract}

\section{Introduction}
  It is well known that context-free grammars play an important role in form language theory from both practical and theoretical point of view. However, some kinds of context dependencies are required in many practical applications, such as the analysis of programming and natural languages, which, therefore, cannot be handled by context-free grammars. For that reason, some more powerful rewriting mechanisms that generate convenient proper subfamilies of the family of context sensitive languages and that make use of advantages of the simple form of context-free productions are of interest.

  This paper discusses two such rewriting mechanisms based on context-free productions. Specifically, it discusses random context grammars and their special and more simple variant, semi-conditional grammars of degree $(1,1)$. In comparison with context-free grammars where erasing productions can be eliminated without affecting the generative power, erasing productions play a significant role in random context grammars and semi-conditional grammars of degree $(1,1)$. Specifically, with them both these rewriting mechanisms characterize the family of recursively enumerable languages (see \cite{dapa} and \cite{mayer}, respectively), while without them they are less powerful then context sensitive grammars (see \cite{dapa} and \cite{paun2}, respectively). As the erasing cases of random context grammars and semi-conditional grammars of degree $(1,1)$ have been studied carefully, this paper concentrates its attention on the nonerasing variants of these grammars.

  A {\em random context grammar}, introduced by van der Walt \cite{walt} in 1970, is a context-free grammar the productions of which are applicable to a sentential form only if some of the nonterminal symbols occur in the sentential form, while some others do not. Specifically, two finite sets of nonterminal symbols---a {\em permitting} and a {\em forbidding} set---are attached to each production, and such a production is applicable to a sentential form if all permitting symbols occur in that sentential form, while no forbidding symbol does. It is well known (see \cite{bordihn94accepting,dapa}) that the family of languages generated by random context grammars is properly included in the family of context sensitive languages, and, in addition, that the elimination of either all permitting or all forbidding sets makes them less powerful (see \cite{bordihn94accepting,waltP,waltF}).

  In 1985, P{\u a}un \cite{paun2} introduced {\em semi-conditional grammars} as a variant of random context grammars, where permitting and forbidding sets are replaced with permitting and forbidding strings. According to the length of these strings, semi-conditional grammars of degree $(i,j)$, for $i,j\ge 0$, are defined. It is proved in \cite{paun2} that for any $i,j\ge 0$, the family of languages generated by semi-conditional grammars of degree $(i,j)$ contains the family of context-free languages and, in addition, is included in the family of context sensitive languages. Furthermore, semi-conditional grammars of degree $(i,j)$, where $1\le i,j\le 2$, $i\neq j$, are powerful enough to characterize the family of context sensitive languages. On the other hand, however, the precise generative power of semi-conditional grammars of degree $(1,1)$ was left open.

  This paper solves this problem so that it demonstrates that semi-conditional grammars of degree $(1,1)$ characterize the family of random context languages. As a consequence, it presents a normal form for random context grammars without erasing productions similar to the normal form for random context grammars with erasing productions proved by Mayer in \cite{mayer}, who left the question of whether this normal form also holds for random context grammars without erasing productions open. Two possible definitions of the relation of the direct derivation step used in the literature are also discussed.

  A semi-conditional grammar $G$ is called {\em simple} if for each production, either its permitting or its forbidding set is empty. It is proved in \cite{lata2009} that for every semi-conditional grammar $G$, there is an equivalent simple semi-conditional grammar $G'$ of the same degree such that $G'$ is without erasing productions if and only if $G$ is. If, in addition, $G$ is of degree $(1,1)$, terminal symbols are not contained in either permitting or forbidding sets, and the set of productions can be decomposed into two disjoint sets according to the permitting and forbidding symbols, we have so-called {\em conditional context-free rewriting systems} introduced in \cite{ita}. It is known that these rewriting systems (with or without erasing productions) are as powerful as semi-conditional grammars of degree $(1,1)$ (with or without erasing productions, respectively), see \cite{lata2009,ita}. Thus, this paper proves that they are as powerful as random context grammars. The reader is also referred to \cite{masopustJCSS} for the discussion of some additional restrictions placed on these systems.

  Finally, as far as the descriptional complexity of semi-conditional grammars is concerned, the reader is referred to \cite{lata,ipl2007,vaszil} for the latest results; an overview of these results is also presented in \cite{lata2009}. Note also that the descriptional complexity of semi-conditional grammars without erasing productions, the descriptional complexity of semi-conditional grammars of degree $(1,1)$, and the descriptional complexity of conditional context-free rewriting systems are open.

\section{Preliminaries and Definitions}
  This paper assumes that the reader is familiar with formal language theory (see \cite{salomaa}). For a set $A$, $|A|$ denotes the cardinality of $A$. For an alphabet (finite nonempty set) $V$, $V^*$ represents the free monoid generated by $V$ where the unit is denoted by $\eps$. Set $V^+ = V^*-\{\eps\}$. For a string $w\in V^*$, let $|w|$ denote the length of $w$ and $\aalph(w)$ denote the set of all symbols occurring in $w$. For a symbol $a\in V$, let $|w|_a$ be the number of occurrences of $a$ in $w$. Let $\CF$, $\CS$, ${\bf REC}$, and ${\bf RE}$ denote the families of context-free, context-sensitive, recursive, and recursively enumerable languages, respectively.

  A {\em random context grammar\/} (see~\cite{walt}) is a quadruple $G=(N,T,P,S)$, where $N$ is the alphabet of nonterminals, $T$ is the alphabet of terminals such that $N\cap T=\emptyset$, $S\in N$ is the start symbol, and $P$ is a finite set of productions of the form $(A\to x,\per,\for)$, where $A\to x$ is a context-free production, $A\in N$, $x\in V^+$ ($V=N\cup T$), and $\per,\for\subseteq N$. If for each production $(A\to x,\per,\for)\in P$, $\per=\emptyset$, then $G$ is said to be a {\em forbidding grammar}. Analogously, if for each production $(A\to x,\per,\for)\in P$, $\for=\emptyset$, then $G$ is said to be a {\em permitting grammar}.

  For two strings $u,v\in V^*$ and a production $(A\to x,\per,\for)\in P$, the relation $uAv\Ra uxv$ holds provided that
  \begin{eqnarray}\label{def1}
    \per\subseteq \aalph(uv) \textrm{\qquad and\qquad } \aalph(uv)\cap\,\for = \emptyset.
  \end{eqnarray}
  The language generated by $G$ is defined as $L(G)=\{w\in T^* : S \Ra^* w\}$, where $\Ra^*$ is the reflexive and transitive closure of the relation $\Ra$. A {\em random context language\/} is a language generated by a random context grammar. The families of languages generated by random context grammars, permitting grammars, and forbidding grammars are denoted by $\RC$, $\Per$, and $\For$, respectively. As usual, if there is no confusion, forbidding sets are omitted from the permitting productions; i.e., $(A\to x,\per)$ is written instead of $(A\to x,\per,\emptyset)$. Analogously in case of forbidding grammars.

  A {\em semi-conditional grammar of degree $(i,j)$}, for $i,j\ge 0$, is a quadruple $G=(N,T,P,S)$, where $N$ is the alphabet of nonterminals, $T$ is the alphabet of terminals such that $N\cap T=\emptyset$, $S\in N$ is the start symbol, and $P$ is a finite set of productions of the form $(A\to x,\per,\for)$, where $A\to x$ is a context-free production, $V=N\cup T$,
  \begin{enumerate}
    \item $\per\subseteq\bigcup_{k=1}^i V^k$,
    \item $\for\subseteq\bigcup_{k=1}^j V^k$,
    \item $|\per|,|\for|\le 1$,
  \end{enumerate}
  and the rewritten symbol is considered in the relation of the direct derivation step (cf. the definition~$(\ref{def1})$, where the rewritten symbol is not considered). Specifically, for two strings $u,v\in V^*$ and a production $(A\to x,\per,\for)\in P$, the relation $uAv\Ra uxv$ holds provided that
  \begin{eqnarray}\label{def2}
    \per\subseteq \aalph(uAv) \textrm{\qquad and\qquad } \aalph(uAv)\cap\,\for = \emptyset.
  \end{eqnarray}
  The language generated by $G$ is defined as $L(G)=\{w\in T^* : S \Ra^* w\}$, where $\Ra^*$ is the reflexive and transitive closure of the relation $\Ra$. A {\em semi-conditional language of degree $(i,j)$\/} is a language generated by a semi-conditional grammar of degree $(i,j)$. The family of languages generated by semi-conditional grammars of degree $(i,j)$ is denoted by $\SC(i,j)$. As usual and for the simplicity, curly brackets are omitted from the notation and $\emptyset$ is replaced with $0$; i.e., for instance, $(A\to x,p,0)$ is written instead of $(A\to x,\{p\},\emptyset)$.

  To prove the main results of this paper, we use the notion of cooperating distributed grammar systems, which are rewriting devices composed of several components represented by grammars cooperating according to a given protocol. In this paper, the considered protocol is so-called terminal derivation mode (or $t$-mode, for short) that makes the component work until it can.

  A {\em cooperating distributed $(CD)$ grammar system\/} (see \cite{gs} for more information) is a construct $\Gamma=(N,T,P_1,P_2,\dots,P_n,S)$, for some $n\ge 1$, where $N$ is the alphabet of nonterminals, $T$ is the alphabet of terminals such that $N\cap T=\emptyset$, $S\in N$ is the start symbol, and $P_1,P_2,\dots P_n$ are finite sets of productions.

  By {\em components} we understand the sets $P_i$ and by {\em $g$-components} we understand the grammars $G_i=(N,T,P_i,S)$, for all $i=1,2,\dots,n$. By a CD grammar system we understand a grammar system where all $g$-components are context-free grammars.

  A {\em permitting CD grammar system\/} (see \cite{sztaki}) is a CD grammar system where all $g$-components are permitting grammars.

  For two strings $u,v\in V^*$ ($V=N\cup T$) and a number $1\le k\le n$, let the relation $u\Ra_k v$ denote a derivation step made by the $g$-component $G_k$, and let $u\Ra_k^t v$ be a derivation such that $u\Ra_k^+ v$ and there is no $w\in V^*$ for which $v\Ra_k w$, where $\Ra_k^+$ denotes the transitive closure of the relation $\Ra_k$. The language generated by a CD grammar system $\Gamma$ working in the terminal mode ($t$\mbox{-}mode) is defined as
  \begin{eqnarray*}
    L(\Gamma) = \{w\in T^*
      & : & \textrm{ there exists } \ell\ge 1 \textrm{ such that } \alpha_i\Ra^t_{k_i} \alpha_{i+1},\\
      &   & 1\le k_i\le n,\textrm{ for each } i=1,\dots,\ell-1,\, \alpha_1=S, \textrm{ and } \alpha_\ell=w\}\,.
  \end{eqnarray*}

  Let $\Percd$ denote the family of languages generated by permitting CD grammar systems working in the $t$-mode. It is proved in \cite{sztaki} that $\Percd=\RC$. (The reader is referred to \cite{sztaki} and \cite{masopustIJFCS} for more details on CD grammar systems with permitting and forbidding components, respectively.) Finally, note that the generative power of CD grammar systems, where $g$\mbox{-}components are permitting grammars using the definition $(\ref{def2})$ of the direct derivation step, is an open problem.

\section{Results}
  Recall that it is known that $\CF\subset \SC(1,1)$ and $\RC \subset \CS$ (see, for instance, \cite{paun2} and \cite{dapa}, respectively). For an example of a semi-conditional grammar of degree $(1,1)$ generating the set of all prime numbers, the reader is referred to \cite{lata2009}.

  \subsection{Comparison of the two definitions}
  \begin{theorem}\label{thm1}
    {\bf $\SC(1,1)\subseteq\RC$}.
  \end{theorem}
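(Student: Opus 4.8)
The plan is to simulate a semi-conditional grammar $G=(N,T,P,S)$ of degree $(1,1)$ by a random context grammar $G'$. The essential difference between the two models is in the definition of the direct derivation step: in $G$, when a production $(A\to x,p,f)$ of degree $(1,1)$ is applied to $uAv$, the rewritten occurrence of $A$ is counted in $\aalph(uAv)$, whereas a random context production $(A\to x,\per,\for)$ applied to $uAv$ checks $\aalph(uv)$, i.e.\ ignores the rewritten symbol. When $p,f\in N$ there are really only two ways the semi-conditional check differs from the random context check: first, a permitting symbol $p=A$ is automatically satisfied by the rewritten $A$ itself; second, a forbidding symbol $f=A$ is \emph{violated} by the rewritten $A$ itself, so such a production is never applicable and may be deleted. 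Also, permitting or forbidding strings may be terminal symbols (degree $(1,1)$ allows $\per,\for\subseteq V$), which a random context grammar cannot test directly.

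First I would dispose of the trivial cases: remove every production $(A\to x,\per,\for)$ with $A\in\for$, since it is never applicable under $(\ref{def2})$. For a production $(A\to x,p,f)$ with $p=A$, the permitting condition is always met, so replace it by $(A\to x,\emptyset,f)$. This reduces the permitting and forbidding conditions that genuinely remain to be about symbols \emph{other than} the rewritten $A$, which is exactly the random context semantics---\emph{except} for the terminal-symbol issue. To handle a permitting or forbidding string that is a terminal $a\in T$, I would keep a nonterminal ``shadow'' $\bar a$ for each terminal $a$: derive in the shadow alphabet $\bar T=\{\bar a: a\in T\}$ throughout, with productions rewriting into $\bar T$ rather than $T$, so that occurrences of terminals are visible to the random context checks as nonterminals; then append a terminating phase that rewrites each $\bar a$ to $a$ via productions $(\bar a\to a,\emptyset,N')$ guarded to fire only once the generative phase is complete. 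Since $G$ is nonerasing, $x\in V^+$, so no production ever removes the only occurrence of a symbol in a way that the nonerasing random context grammar cannot mirror; this keeps $G'$ nonerasing as well, which matters for the claimed normal-form corollary.

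The key steps, in order: (1) normalize $P$ by deleting productions with $A\in\for$ and clearing $p=A$; (2) introduce shadow terminals $\bar T$ and rewrite all right-hand sides over $N\cup\bar T$; (3) translate each remaining $(A\to x,p,f)$ into a random context production over the shadow alphabet, with $p$ (if a real shadow terminal) placed in the permitting set and $f$ likewise in the forbidding set; (4) add a clean-up component---either a single new nonterminal $F$ whose appearance signals ``no $G$-nonterminal remains,'' or direct productions $(\bar a\to a)$ that are only enabled when every symbol of $N$ is absent, which a finite forbidding set can check; (5) argue $L(G')=L(G)$ by showing the shadow derivations of $G'$ are in bijection with the derivations of $G$, with the terminal phase adding only a deterministic relabelling. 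I expect step (4)---safely detecting the end of the simulation and relabelling $\bar T$ back to $T$ without allowing premature or partial relabelling, while staying nonerasing---to be the main technical obstacle, since a forbidding set is finite and $N$ is finite so the check ``$\aalph(w)\cap N=\emptyset$'' can indeed be encoded, but one must ensure the relabelling productions do not reintroduce nonterminals or interfere with each other; a standard trick is to relabel in a fixed left-to-right sweep controlled by an auxiliary marker.
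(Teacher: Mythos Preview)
Your proposal is correct and matches the paper's proof essentially step for step: delete productions with $A\in\for$, drop $A$ from the permitting set, introduce nonterminal shadows $\bar a$ of terminals so that terminal conditions become nonterminal ones, and finish with productions $(\bar a\to a,\emptyset,N)$. Your worry about step~(4) is unwarranted: since every simulation production has its left-hand side in $N$, once $N$ is absent only the relabelling productions are applicable and they introduce no nonterminals, so no sweep or marker is needed.
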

  \begin{proof}
    Let $L\in\SC(1,1)$, then there is a semi-conditional grammar $G=(N,T,P,S)$ of degree $(1,1)$ such that $L(G)=L$. Construct the random context grammar $G'=(N',T,P',S)$ with $N'=N\cup \{a' : a\in T\}$ and $P'$ constructed as follows:
    \begin{enumerate}
      \item set $P'=\{(A\to h(x),h(\per),h(\for)) : (A\to x,\per,\for)\in P\}$, where $h$ is a homomorphism defined as $h(X)=X$, for $X\in N$, and $h(a)=a'$, for $a\in T$;
      \item remove each production $(A\to x,\per,\for)$ with $A\in\for$ from $P'$;
      \item replace each production $(A\to x,\per,\for)$ with $(A\to x,\per-\{A\},\for)$ in $P'$;
      \item for each $a\in T$, add $(a'\to a,\emptyset,N)$ to $P'$.
    \end{enumerate}

    Thus, $(A\to h(x),h(\per)-\{A\},h(\for))\in P'$ if and only if $(A\to x,\per,\for)\in P$ and $A\notin \for$. In addition,
    \begin{itemize}
      \item $(A\to x,\per,\for)\in P$ is applicable to $uAv$ in $G$ if and only if
      \item $\per\subseteq\aalph(uAv)$ and $\for\cap\aalph(uAv)=\emptyset$, which is if and only if
      \item $\per-\{A\}\subseteq\aalph(uv)$, $\for\cap\aalph(uv)=\emptyset$, and $A\notin\for$.
      \item This is if and only if $(A\to h(x),h(\per)-\{A\},h(\for))$ is applicable to $h(uAv)$ in $G'$.
    \end{itemize}

    As $h(\per\cup\for)\subseteq N'$, $G'$ is a random context grammar generating $L$.
  \end{proof}

  More generally, the previous proof gives a method how to transform any random context grammar using the definition (\ref{def2}) of the direct derivation step to an equivalent random context grammar using the definition (\ref{def1}). The converse transformation is proved so that each production $(A\to x,\per,\for)$ is replaced with two productions $(A\to A',\emptyset,\{X' : X\in N\})$ and $(A'\to x,\per,\for)$. Thus, both definitions of the relation of the direct derivation step are equivalent for random context grammars.

  This paper also proves the analogous result for semi-conditional grammars of degree $(1,1)$. Let {\bf $\SC'(1,1)$} denote the family of languages generated by semi-conditional grammars of degree $(1,1)$ using the definition $(\ref{def1})$, then we have the following result.

  \begin{corollary}\label{sc2->1}
    {\bf $\SC(1,1)\subseteq \SC'(1,1)$.}
  \end{corollary}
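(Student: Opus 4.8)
The plan is to reuse the construction from the proof of Theorem~\ref{thm1}, stripped of the two ingredients that were present there only because random context grammars are not allowed to keep terminals in their context sets. Since a semi-conditional grammar of degree $(1,1)$ using the definition~$(\ref{def1})$ \emph{may} keep terminal symbols in its permitting and forbidding singletons, we can drop the homomorphism $h$ of step~1 and the terminal-renaming productions of step~4, and leave the alphabets of $G$ untouched.

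Concretely, given a semi-conditional grammar $G=(N,T,P,S)$ of degree $(1,1)$ with $L(G)=L$ (evaluated with~$(\ref{def2})$), I would put $G'=(N,T,P',S)$, where $P'$ is obtained from $P$ by performing only steps~2 and~3 of the proof of Theorem~\ref{thm1}: first delete every production $(A\to x,\per,\for)$ with $A\in\for$, then replace every remaining production $(A\to x,\per,\for)$ by $(A\to x,\per-\{A\},\for)$. Because $\per-\{A\}\subseteq\per\subseteq V$ and $|\per-\{A\}|\le|\per|\le 1$, while $\for$ is left untouched, $G'$ is again a semi-conditional grammar of degree $(1,1)$; and since the underlying context-free productions $A\to x$ are kept verbatim, $G'$ is without erasing productions if and only if $G$ is.

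For correctness I would argue, using only the identity $\aalph(uAv)=\aalph(uv)\cup\{A\}$, that the relation $\Ra$ of $G$ under~$(\ref{def2})$ and the relation $\Ra$ of $G'$ under~$(\ref{def1})$ coincide as binary relations on $V^*\times V^*$, whence $L(G')=L(G)=L$. Indeed, a production $(A\to x,\per,\for)\in P$ with $A\in\for$ can never be applied under~$(\ref{def2})$ because $A\in\aalph(uAv)$, so discarding it in step~2 changes nothing; and for a production with $A\notin\for$, the same chain of equivalences as in the proof of Theorem~\ref{thm1} shows that it is applicable to $uAv$ in $G$ under~$(\ref{def2})$ precisely when $\per-\{A\}\subseteq\aalph(uv)$ and $\for\cap\aalph(uv)=\emptyset$, i.e.\ precisely when $(A\to x,\per-\{A\},\for)$ is applicable to $uAv$ in $G'$ under~$(\ref{def1})$.

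I do not expect a genuine obstacle here; the entire content is the bookkeeping around the two special roles that the rewritten occurrence of $A$ plays under~$(\ref{def2})$ but not under~$(\ref{def1})$: when $A$ is the permitting symbol the permitting test is automatically satisfied and must be erased (step~3), and when $A$ is the forbidding symbol the forbidding test can never be satisfied and the production must be removed (step~2). The one routine point to double-check is that these adjustments preserve the degree, that is, that $|\per-\{A\}|\le 1$ and $|\for|\le 1$ still hold, which is immediate.
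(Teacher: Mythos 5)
Your proposal is correct and is exactly the paper's proof: the paper likewise constructs $G'$ by keeping $N'=N$ and applying only clauses 2 and 3 of the construction from Theorem~\ref{thm1}, omitting the homomorphism and the terminal-renaming productions. Your added justification via $\aalph(uAv)=\aalph(uv)\cup\{A\}$ is the same chain of equivalences already recorded in that earlier proof.
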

  \begin{proof}
    Modify the construction of $G'=(N',T,P',S)$ from the previous proof so that $N'=N$ and $P'$ is constructed from $P$ using only clauses 2 and 3.
  \end{proof}

  \begin{theorem}
    {\bf $\SC(1,1)=\SC'(1,1)$}.
  \end{theorem}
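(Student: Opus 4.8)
The goal is to establish $\SC(1,1)=\SC'(1,1)$. By Corollary~\ref{sc2->1} we already have $\SC(1,1)\subseteq\SC'(1,1)$, so the remaining task is the reverse inclusion $\SC'(1,1)\subseteq\SC(1,1)$. This is exactly the ``converse transformation'' sketched immediately after the proof of Theorem~\ref{thm1} for random context grammars, and the plan is to carry out the analogous idea for semi-conditional grammars of degree $(1,1)$, taking care that the degree is preserved and that erasing productions are not introduced.

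The plan is as follows. Let $L\in\SC'(1,1)$ be generated by a semi-conditional grammar $G=(N,T,P,S)$ of degree $(1,1)$ using definition~(\ref{def1}), i.e.\ where the rewritten occurrence of $A$ is \emph{not} inspected when checking the permitting/forbidding condition. I would build $G'=(N',T,P',S)$ using definition~(\ref{def2}) as follows: for each nonterminal $A\in N$ introduce a fresh ``primed'' copy $A'$, set $N'=N\cup\{A' : A\in N\}$, and replace each production $(A\to x,\per,\for)\in P$ by the pair of productions $(A\to A',\lambda,A)$ and $(A'\to x,\per',\for')$, where $\per'$ and $\for'$ are obtained from $\per,\for$ in the obvious way (leaving terminal permitting/forbidding strings untouched; if a nonterminal condition symbol $B\in N$ is involved, one must decide whether to refer to $B$ or to $B'$---see below). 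Note both new productions are nonerasing since $x\in V^+$, so the nonerasing property of $G$ is inherited by $G'$, and each new production still has $|\per'|,|\for'|\le 1$ and condition strings of length $\le 1$, so $G'$ is of degree $(1,1)$.

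The key point is why this simulates definition~(\ref{def1}) inside definition~(\ref{def2}). When $G'$ applies $(A\to A',\lambda,A)$ to a sentential form $uAv$, the forbidding symbol $A$ is tested against $\aalph(uAv)$, which contains $A$, so this step is applicable \emph{only if this is the only occurrence of $A$}; the effect is to ``mark'' the chosen occurrence by turning it into $A'$ while all other nonterminals are left alone. Then $(A'\to x,\per',\for')$ is applied: under definition~(\ref{def2}) the symbol $A'$ now present is inspected, but since $A'$ is fresh and does not occur in $\per'$ or $\for'$ this is harmless, and the remaining alphabet being tested is precisely $\aalph(u'v')$ of the surrounding context, matching the intent of definition~(\ref{def1}) for the original production on $A$. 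The subtlety I expect to be the main obstacle is bookkeeping the condition symbols: if some production of $G$ has a nonterminal condition symbol $B$, then in a sentential form of $G'$ that nonterminal may currently appear either as $B$ or as the intermediate $B'$ (if some other rule is mid-simulation). One clean fix is to forbid interleaving: add $B$ and all primed symbols to the forbidding set of the marking step, or more simply ensure at most one primed symbol exists at a time by making $(A'\to x,\dots)$ forbid every $C'$ with $C\ne A$, so that between the marking and the rewriting no other simulation can start; then condition symbols can be taken verbatim over $N$. I would also need the straightforward converse-direction check $L(G')=L(G)$: every derivation of $G$ lifts step-by-step to a derivation of $G'$ by the two-step pattern, and conversely any $G'$-derivation reaching a terminal word must, because primed symbols are non-terminal and only removable via the paired rule, decompose into such matched pairs, yielding a $G$-derivation. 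Combining $\SC'(1,1)\subseteq\SC(1,1)$ with Corollary~\ref{sc2->1} gives the equality.
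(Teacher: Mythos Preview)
Your proposal has a genuine gap at its core. The marking rule $(A\to A',0,A)$ is \emph{never} applicable in $G'$: under definition~(\ref{def2}) the forbidding test is $\{A\}\cap\aalph(uAv)=\emptyset$, and $A\in\aalph(uAv)$ always holds because the rewritten occurrence is inspected. Your sentence ``this step is applicable only if this is the only occurrence of $A$'' is simply wrong for definition~(\ref{def2}); it would be the correct reading under definition~(\ref{def1}), but that is precisely the semantics you are trying to simulate, not the one you are allowed to use.

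Your suggested repairs do not stay inside $\SC(1,1)$. Both ``add $B$ and all primed symbols to the forbidding set of the marking step'' and ``make $(A'\to x,\dots)$ forbid every $C'$ with $C\ne A$'' require forbidding sets of cardinality greater than one, which is disallowed in semi-conditional grammars (recall $|\per|,|\for|\le 1$). The converse transformation quoted after Theorem~\ref{thm1} uses exactly such a multi-element forbidding set $\{X':X\in N\}$; that works for random context grammars but not here. So the two-production replacement, even with the intended fixes, does not give a degree-$(1,1)$ grammar.

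The paper circumvents the single-condition bottleneck differently: it maintains one distinguished ``bracketed'' controller symbol in every sentential form (introduced via a new axiom $S_1\to [S]$) and cycles this controller through a sequence of states $[B]\to[pB]\to[p_1B]\to[p_2B]\to[B]$ for each simulated production $p$. The controller first authorises the creation of a single $A'$ (production $(A\to A',[pB],A')$: permitting $[pB]$ ties the marking to the current phase, forbidding $A'$ guarantees uniqueness), then checks the original permitting and forbidding symbols one at a time in separate steps, and finally releases $A'\to\alpha$. Because the controller ``hides'' one symbol $B$ of the sentential form, the construction also has to special-case the situations $u=B$ and $v=B$. This sequentialisation is what lets every production carry at most one permitting and one forbidding symbol; your two-step scheme has no mechanism to serialise the checks.
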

  \begin{proof}
    By Corollary \ref{sc2->1}, it remains to show {\bf $\SC'(1,1)\subseteq \SC(1,1)$}. Let $G=(N,T,P,S)$ be a semi-conditional of degree $(1,1)$ using the definition $(\ref{def1})$ such that $L(G)=L$. Construct the semi-conditional grammar $G'=(N',T,P',S')$ of degree $(1,1)$ using the definition $(\ref{def2})$, where $S_1$ is a new start symbol,
    $N' = N\cup\{S_1\}\cup\{[A] : A\in N\cup T\}\cup \{A' : A\in N\}\cup\{[pA],[p_1A],[p_2A] : p=(A\to\alpha,u,v)\in P\}$, and initialize \[P'= \{(S_1\to [S],0,0)\} \cup \{([a]\to a,0,0) : a\in T\}\,.\] Then, for each production $p=(A\to\alpha,u,v)\in P$, the following productions are added to $P'$.
    \begin{enumerate}
      \item $([A]\to [x]\beta,u,v)$ \qquad\quad\ for $\alpha=x\beta$, $x\in V$,
      \item[] and for each $B\in N\cup T$, add
      \item $([B]\to [pB],0,0)$,
      \item $(A\to A',[pB],A')$,
      \item $([pB]\to [p_1B],A',0)$,
      \item $([p_1B]\to [p_2B],u,v)$ \qquad for $v\neq B$,
      \item $([p_1B]\to [p_2B],0,v)$ \qquad for $u=B$ and $v\neq B$,
      \item $(A'\to \alpha,[p_2B],0)$,
      \item $([p_2B]\to [B],0,A')$.
    \end{enumerate}
    It is not hard to see that $L(G')=L(G)$.
  \end{proof}

  \subsection{Generative power}
  Recall that the following holds: $\CF\subset \SC(1,1) \subseteq \RC \subset \CS$. In the rest of this section, we prove the other inclusion, i.e., we prove that $\SC(1,1)=\RC$. To do this, we first prove two auxiliary lemmas.

  \begin{lemma}\label{lem1}
    For each random context grammar $G$, there is an equivalent random context grammar $G'$ such that $(A\to x,\per,\for)$ is a production of $G'$ implies that $A\notin\for$.
  \end{lemma}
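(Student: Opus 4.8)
The plan is to transform a given random context grammar $G=(N,T,P,S)$ into an equivalent one $G'$ in which no production $(A\to x,\per,\for)$ ever has $A\in\for$. The obstacle is purely syntactic: if $A\in\for$, then the production uses the definition~(\ref{def1}) of the direct derivation step, where the rewritten occurrence of $A$ is \emph{not} counted, so such a production is genuinely applicable precisely when $A$ occurs in the sentential form but only at the position being rewritten (and no \emph{other} symbol of $\for$ occurs). We need to simulate this "$A$ occurs exactly once here" test without literally having $A$ in the forbidding set.

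First I would, for each nonterminal $A\in N$, introduce a primed copy $\bar A$ (written, say, using a fresh alphabet $\{\bar A : A\in N\}$), and keep the terminal alphabet unchanged. The idea is that $\bar A$ marks "the unique occurrence of $A$ we are about to rewrite," so the forbidding test $A\in\for$ becomes a forbidding test on $A$ itself among the \emph{non-marked} symbols. Concretely, I would replace a production $p=(A\to x,\per,\for)$ with $A\in\for$ by a short sequence: (i) a production $(A\to \bar A, \emptyset, \{\bar B : B\in N\})$ that picks one occurrence of $A$ and marks it, the forbidding set $\{\bar B : B\in N\}$ guaranteeing no marking is already present; (ii) a production $(\bar A\to x', \per', \for')$ that actually rewrites, where $\per'=\per$ (note $A\notin\per$ may fail, so one also strips $A$ from $\per$ if needed, replacing it by the guarantee coming from $\bar A$'s presence — but since $\bar A$ is present and $A$ may or may not be, this needs care), and $\for'=\for$; crucially $\bar A\notin\for'$, and the test $A\in\for$ now correctly checks that no \emph{other} occurrence of $A$ remains. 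Productions $p$ with $A\notin\for$ are kept verbatim, and one must also add nothing that lets $\bar A$ symbols leak into terminal strings — since $\bar A\notin T$, any sentential form containing a $\bar A$ cannot be terminal, so a stalled marked derivation simply fails.

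The key steps in order: (1) define $N'$, the primed copies, and the homomorphic bookkeeping; (2) for each $p$ with $A\in\for$, write the two-stage "mark then rewrite" replacement with the forbidding set $\{\bar B:B\in N\}$ blocking concurrent markings; (3) handle the permitting set $\per$ of the rewriting stage — if $A\in\per$ this is automatically satisfiable only when a second $A$ is present, but $A\in\for$ already forbids that, so in fact $A\in\per\cap\for$ makes $p$ unusable and can be discarded, and otherwise $\per\subseteq N$ is tested on the form which still contains the mark $\bar A$ in place of the chosen $A$, so one leaves $\per$ unchanged (primed symbols never appear in $\per$); (4) verify the simulation both ways — every derivation of $G$ is mimicked step-by-step in $G'$ by inserting a mark/unmark pair, and conversely any $G'$-derivation reaching a terminal word must clear all marks, so it projects to a legal $G$-derivation because the forbidding/permitting conditions were faithfully transferred.

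The main obstacle I expect is the interaction between the forbidding set $\{\bar B:B\in N\}$ on the marking step and the rest of the derivation: one must ensure that at most one mark is ever live, and that a marking step is always immediately followed by the corresponding rewriting step rather than by some unrelated production that could exploit or be blocked by the presence of $\bar A$. This is handled by noting that while a mark $\bar A$ is present, (a) no other marking step can fire (its forbidding set is violated), and (b) any original production of $G$ that is copied verbatim does not mention $\bar A$ at all, so it behaves exactly as in $G$ — hence the mark is inert except for the one production designed to consume it, and the $G'$-derivation, read modulo the mark, is a valid $G$-derivation. Showing this "inertness" cleanly, and checking the edge case $A\in\per\cap\for$, are the only genuinely delicate points; everything else is routine bookkeeping, so I would state the construction and then assert $L(G')=L(G)$ with a brief justification along these lines.
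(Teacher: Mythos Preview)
Your construction has a genuine gap: the ``inertness'' claim in step~(4) is false. A verbatim production $(B\to y,\per_B,\for_B)$ does not mention $\bar A$, but it may well mention $A$ in $\for_B$, and once you have replaced an occurrence of $A$ by $\bar A$, that occurrence is \emph{hidden} from the forbidding test of the verbatim production. Concretely, take $G$ with nonterminals $\{S,A,B\}$, terminals $\{a,b\}$, and productions
\[
(S\to AB,\emptyset,\emptyset),\qquad (A\to a,\emptyset,\{A,B\}),\qquad (B\to b,\emptyset,\{A\}).
\]
In $G$ the form $AB$ is a dead end (each production is blocked by the other nonterminal), so $L(G)=\emptyset$. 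In your $G'$, the second production is split (since $A\in\{A,B\}$) while the third is kept verbatim (since $B\notin\{A\}$). Then
\[
S \Ra AB \Ra \bar A\,B \Ra \bar A\,b \Ra ab,
\]
where the step $\bar A B\Ra \bar A b$ uses the verbatim $(B\to b,\emptyset,\{A\})$ and succeeds precisely because the lone $A$ has been disguised as $\bar A$. Hence $ab\in L(G')\setminus L(G)$.

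The fix is exactly what the paper does: apply the two-stage replacement \emph{uniformly to every production}, not only to those with $A\in\for$. Then the only productions in $G'$ are marking productions $(A\to A',\emptyset,N')$ and rewriting productions $(A'\to x,\per,\for)$; while a mark $A'$ is present no further marking is possible and no rewriting of a different primed symbol exists, so the very next step must be $(A'\to x,\per,\for)$, whose context test on $uv$ is literally identical to the original test in $G$. This removes the interleaving problem entirely and makes the equivalence immediate. Your selective replacement, together with the possibility of firing verbatim productions while a mark is live, is what breaks the simulation; simply adding $\{\bar B:B\in N\}$ to the forbidding set of every verbatim production would also repair it, but at that point the uniform construction is both cleaner and shorter.
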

  \begin{proof}
    Let $G=(N,T,P,S)$ be a random context grammar. Construct the random context grammar $G'=(N\cup N',T,P',S)$, where $N'=\{A':A\in N\}$ is such that $N\cap N'=\emptyset$, and $P'=\{(A\to A',\emptyset,N'),(A'\to x,\per,\for) : (A\to x,\per,\for)\in P\}$. Then, it is not hard to see that $G$ and $G'$ generate the same language and $G'$ satisfies the required property.
  \end{proof}

  The following lemma proves that every random context language is generated by a CD grammar system with permitting components working in the $t$-mode, where each permitting set is of cardinality no more than one.

  \begin{lemma}\label{lem2}
    Every random context language is generated by a permitting CD grammar system where each permitting set is either empty or a one element set.
  \end{lemma}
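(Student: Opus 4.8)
The plan is to simulate a random context grammar directly by a permitting CD grammar system working in the $t$-mode, devoting a separate component to each ``elementary action'' so that every permitting set stays a singleton. Let $L$ be a random context language. By Lemma~\ref{lem1} we may fix a random context grammar $G=(N,T,P,S)$ with $L(G)=L$ and $A\notin\for$ for every $(A\to x,\per,\for)\in P$ (the latter is not strictly necessary, but it makes the forbidding tests below transparent). The system $\Gamma$ will maintain the invariant that between the simulations of two derivation steps of $G$ the sentential form lies over $N\cup T$ (a \emph{resting form}), whereas during a simulation it carries exactly one fresh ``active marker.''

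To simulate a production $p=(A\to x,\per,\for)$ with $\per=\{B_1,\dots,B_k\}$ and $\for=\{C_1,\dots,C_m\}$ I would use a chain of components, each advancing the active marker by one stage. The technical heart is the first stage, selecting a single occurrence of $A$: a single context-free component working in the $t$-mode cannot rewrite exactly one occurrence (if it contains an $A$-rule it must eliminate every $A$ to terminate), so I would split the task into a component $\{(A\to\nu^p)\}$ that renames every occurrence of $A$ by a fresh symbol $\nu^p$, followed by the component
\[\{(\nu^p\to A,\{\nu^p\}),\ (\nu^p\to\mu_1^p),\ (\mu_1^p\to\dagger,\{\mu_1^p\})\},\]
where $\dagger$ is a fresh symbol with no productions. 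Under definition $(\ref{def1})$ the first rule reverts a $\nu^p$ only while a second $\nu^p$ survives, so the last $\nu^p$ is forced to become $\mu_1^p$; and whenever two copies of $\mu_1^p$ coexist the third rule is applicable and, in order to terminate, must be fired, irrevocably planting the trap symbol $\dagger$. Hence every run of this component that can still reach $T^*$ leaves exactly one $\mu_1^p$ and reverts all other $A$'s --- precisely the effect of choosing one occurrence of $A$. After that, for $i=1,\dots,k$ a component $\{(\mu_i^p\to\mu_{i+1}^p,\{B_i\})\}$ tests one permitting symbol; for $j=1,\dots,m$ a component $\{(\mu_{k+j}^p\to\mu_{k+j+1}^p),\ (C_j\to C_j)\}$ tests that $C_j$ is absent, using that $(C_j\to C_j)$ stays applicable forever, and thus blocks the $t$-mode run, as long as $C_j$ occurs; finally $\{(\mu_{k+m+1}^p\to x)\}$ performs the rewrite and restores a resting form.

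It remains to stop distinct simulations from interleaving, which could otherwise make a forbidding test spuriously succeed because some symbol is temporarily hidden inside a marker. To that end I would append to each ``start'' component $\{(A'\to\nu^q)\}$ the rules $(m\to m)$ for every active marker $m\neq\nu^q$; such a component then terminates only from a resting form, and the marker it creates locks every start component until the current chain reaches its final rule. The checking and finishing components act only when their own marker is present, so they cannot misfire, and from any form carrying a marker at most one component --- the next one in the chain --- can perform a $t$-step, and none can if its guard fails, in which case the derivation is stuck at a non-terminal word.

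Finally I would verify $L(\Gamma)=L(G)$. Every derivation step of $G$ is reproduced by the corresponding chain, so $L(G)\subseteq L(\Gamma)$; conversely, the invariant forces a $\Gamma$-derivation that reaches a word over $T$ to consist of complete, non-interleaved simulations of $G$-steps, and the correctness of the gadgets above --- together with the use of definition $(\ref{def1})$, which ignores the rewritten symbol exactly as hiding a symbol inside a marker does --- shows that each such simulation performs exactly one $G$-step, while every ``bad'' branch either gets stuck with a marker present or produces $\dagger$, and so contributes nothing. The place where the argument needs the most care is the single-occurrence gadget and the proof that none of its ill-behaved $t$-mode runs can survive to a terminal word; the rest is routine bookkeeping. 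Since every permitting set used is empty or a singleton and no forbidding set appears, $\Gamma$ is a permitting CD grammar system of the required form.
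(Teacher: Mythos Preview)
Your construction is correct and follows a genuinely different route from the paper's. The paper tags \emph{every} nonterminal of the sentential form with the index $i$ of the production being simulated (via a homomorphism $h_i$) and devotes one large component $P_i$ to each production: inside $P_i$ a single occurrence $[A,i]$ is advanced through stages $[A,i,1],\dots,[A,i,k{+}1]$ to check the permitting symbols, then all remaining nonterminals are ``primed'', with a self-loop $([X,i]\to[X,i])$ for each forbidden $X$ blocking termination; separate components $P_0$ and $\bar P_i$ undo the priming and relabel the whole form for the next production. You instead keep the resting form over the original alphabet and thread a single fresh marker through a chain of tiny components, one per elementary test, relying on $(m\to m)$ loops in the start components for mutual exclusion. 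These are dual ways of preventing interleaving---global recolouring versus a single active token. Your approach yields many more components but each is one or two rules, and your two-stage single-occurrence gadget (rename all $A$'s, then revert all but one using the permitting rule $(\nu^p\to A,\{\nu^p\})$ together with the trap $(\mu_1^p\to\dagger,\{\mu_1^p\})$) is a neat device that the paper avoids entirely by having the recoloured copies of $A$ sit inert until the priming phase. Both proofs share the same conceptual core: the forbidding condition is encoded inside the $t$-mode by a non-terminating self-loop on the forbidden symbol, so only the packaging differs.
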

  \begin{proof}
    Let $L$ be a random context language, and let $G=(N,T,P,S)$ be a random context grammar generating $L$ that satisfies the property of Lemma \ref{lem1}. Let the productions of $P$ be labeled by numbers from $1$ to $n=|P|$. Then, for each labeled production $i.(A\to x,\per,\for)\in P$ with $\per=\{X_1,X_2,\dots,X_k\}$, for some $k\ge 0$, create a new component $P_i$ containing the following productions:
    \begin{enumerate}
      \item $([A,i]\to [A,i,1],\emptyset)$,
      \item $([A,i,j]\to [A,i,j+1],\{[X_j,i]\})$, for $1\le j\le k$,
      \item $([A,i,k+1]\to \br{h_i(x)},\emptyset)$, where $h_i$ is a homomorphism defined as $h_i(X)=[X,i]$, for $X\in N$, and $h_i(a)=a$, for $a\in T$,
      \item $(\br{h_i(x)}\to \br{h_i(x)},\{\br{h_i(x)}\})$,
      \item $([X,i]\to [X,i],\emptyset)$, for $X\in\for$,
      \item $([X,i]\to [X,i]',\{\br{h_i(x)}\})$, for $X\in N-\for$,
      \item $([A,i,j]\to [A,i,j],\emptyset)$, for $1\le j\le k$,
      \newcounter{k}
      \setcounter{k}{\value{enumi}}
    \end{enumerate}
    and a new component $\bar{P_i}$ containing the following productions:
    \begin{enumerate}
      \setcounter{enumi}{\value{k}}
      \item $([X,i]\to [X,j],\emptyset)$, for $X\in N$, $1\le j\le n$,
      \item $([X,k]\to [X,\ell],\{[Y,m]\})$, for $X,Y\in N$, $1\le k,\ell,m \le n$, $k\neq m$.
    \end{enumerate}
    Finally, add the component \[P_0=\{(S'\to [S,i],\emptyset), ([A,i]'\to [A,i],\emptyset), (\br{h_i(x)}\to h_i(x),\emptyset) : A\in N,\, 1\le i\le n\}\,.\]

    Let $\Gamma=(N',T,P_0,P_1,\bar{P_1},\dots,P_n,\bar{P_n},S')$ be a permitting CD grammar system, where
    \begin{eqnarray*}
      N'  & =   & \{S'\}\cup N\times \{1,2,\dots,n\}\\
          &\cup & \{[A,i,j] : i.(A\to x,\per,\for)\in P,\, 1\le j \le |\per|+1\}\\
          &\cup & \{\br{h_i(x)} : (A\to x,\per,\for)\in P,\, 1\le i\le n\}\,.
    \end{eqnarray*}

    To prove that $L(G)\subseteq L(\Gamma)$, consider a derivation step of a successful derivation of $G$. Assume that a production $(A\to x,\per,\for)\in P$ labeled by $i$ is applied in this derivation step, i.e., $uAv\Ra uxv$, $\per\subseteq\aalph(uv)$, and $\for\cap\aalph(uv)=\emptyset$. We prove that \[h_i(uAv)\Ra^t_i h_i(u)'\br{h_i(x)}h_i(v)' \Ra^t_0 h_i(uxv)\] in $\Gamma$, where $h_i(z)'$ denotes $h_i(z)$ with all nonterminal symbols primed. Furthermore, if the next production applied in $G$ is labeled by $j$, we prove that the derivation of $\Gamma$ proceeds either by productions from $P_i$, for $i=j$, or, otherwise, by productions from ${\bar{P_i}}$, i.e., $h_i(uxv)\Ra^t_{\bar{i}} h_j(uxv)$.

    Clearly, by productions from $P_i$, \[ h_i(u)[A,i]h_i(v)\Ra h_i(u)[A,i,1]h_i(v)\Ra h_i(u)[A,i,2]h_i(v) \Ra^* h_i(u)\br{h_i(x)} h_i(v)\] because all symbols from $\per$ occur in $uv$. Then, all other nonterminals can be primed since there are no symbols from $\for$ in $uv$, i.e., $h_i(u)\br{h_i(x)} h_i(v)\Ra^* h_i(u)'\br{h_i(x)} h_i(v)'$. Now, notice that only one symbol $\br{h_i(x)}$ is presented in $h_i(u)'\br{h_i(x)}h_i(v)'$, and, therefore, this component of $\Gamma$ is blocked; i.e., the whole derivation by productions from $P_i$ is $h_i(uAv)\Ra_i^t h_i(u)'\br{h_i(x)}h_i(v)'$. Then, by productions from $P_0$, the derivation proceeds as $h_i(u)'\br{h_i(x)}h_i(v)'\Ra^t_0 h_i(uxv)$. Finally, for $j=i$, productions from $P_i$ are applied again. Otherwise, if $j\neq i$, productions from $\bar{P_i}$ are applied and the derivation is $h_i(uxv)\Ra^t_{\bar{i}} h_j(uxv)$. In either case, the proof proceeds by induction.

    To prove the other inclusion, $L(\Gamma)\subseteq L(G)$, consider a successful derivation of $\Gamma$. Such a derivation is of the form $S'\Ra^t_0 \alpha_1 \Ra^t \alpha_2 \Ra^t \dots \Ra^t \alpha_k$, where $\alpha_k\in T^*$, for some $k\ge 1$. Assume that $\alpha_m \Ra^t_i \alpha_{m+1}$ by productions from $P_i$, for some $i\in\{0,j,\bar{j}\}$, where $1\le j\le n$ and $1\le m < k$, and that $\alpha_m=h_i(u_0Au_1Au_2\dots Au_r)$, for some $r\ge 0$, where $A\notin\aalph(u_0u_1\dots u_r)$, $r=0$ implies that there is no $[A,i]$ in $\alpha_m$, and $h_0\in\{h_i : 1\le i\le n\}$.

    Then, with respect to $i$:
    \begin{description}
      \item[{\bf A.}] If $i=\bar{j}$, then $\alpha_{m+1}=h_\ell(u_0Au_1Au_2\dots Au_r)$, for some $\ell\neq i$. In addition, the only applicable productions are productions from $P_\ell$ and $\bar{P_\ell}$. Therefore, the derivation proceeds as in ${\bf A}$ or ${\bf B}$.

      \item[{\bf B.}] If $i=j$, let $(A\to x,\per,\for)\in P$ be the production labeled by $i$. Then, $u_0u_1\dots u_r\in h_i((V-(\for\cup\{A\}))^*)$, which follows from the fact that the derivation is successful because if there appeared a symbol $X\in\for$ in the sentential form, the derivation would keep replacing $[X,i]$ with $[X,i]$ for ever, see production $5$. It also implies that $r\ge 1$; otherwise, there is no applicable production in $P_i$, but each component is required to make at least one derivation step. Therefore, according to the productions of $P_i$,
      \begin{eqnarray}\label{B}
        \alpha_{m+1}=h_i(u_0)'A_1h_i(u_1)'A_2h_i(u_2)'\dots A_rh_i(u_r)'\,,
      \end{eqnarray}
      where $A_1,A_2,\dots,A_r\in\{\br{h_i(x)},[A,i]',[A,i,j] : 1\le j\le n\}$, and $m+1 < k$. However, the derivation is successful only if there is no more than one occurrence of $\br{h_i(x)}$ and no occurrence of a symbol of the form $[A,i,j]$ in $\alpha_{m+1}$; otherwise, $\br{h_i(x)}$ or $[A,i,j]$ are replaced with themselves for ever, see productions $4$ and $7$. This and production $6$ imply that $A_1,A_2,\dots,A_r\in\{\br{h_i(x)},[A,i]'\}$ and $|\alpha_{m+1}|_{\br{h_i(x)}}= 1$. Finally, notice that only the productions of $P_0$ are applicable.

      Thus, we can assume that $\alpha_m=h_i(v_0Av_1)$ and $\alpha_{m+1}=h_i(v_0)'\br{h_i(x)}h_i(v_1)'$, for some $v_0v_1\in (V-\for)^*$. By productions constructed in $2$ and $5$, we have verified that $\per\subseteq\aalph(v_0v_1)$ and $\for\cap\aalph(v_0v_1)=\emptyset$. Then, \[v_0Av_1\Ra v_0xv_1\] in $G$ by the production $(A\to x,\per,\for)$.

      \item[{\bf C.}] If $i=0$, then, as shown above, there is an applicable production in $P_0$ only if $\alpha_m$ is of the form achieved in $(\ref{B})$ above, i.e., $\alpha_m=h_i(u)'\br{h_i(x)}h_i(v)'$, for some $x,uv\in V^*$, and $\alpha_{m+1}=h_i(uxv)$.
    \end{description}
    The proof now proceeds by induction.

    As $\alpha_1=[S,i]$, for some $1\le i\le n$, the inclusion is proved.
  \end{proof}

  Using the previous lemma, we can prove that any random context language is generated by a semi-conditional grammar of degree $(1,1)$.

  \begin{theorem}\label{thm3}
    {\bf $\RC\subseteq\SC(1,1)$}.
  \end{theorem}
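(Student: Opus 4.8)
The plan is to start from a permitting CD grammar system $\Gamma$ of the special form guaranteed by Lemma~\ref{lem2}, and simulate it by a single semi-conditional grammar $G'$ of degree $(1,1)$. The key structural fact to exploit is that in $\Gamma$ every permitting set has size at most one, so the context-checking inside a single component is already ``degree-$(1,0)$'' in flavour; the real work of the simulation is not the permitting tests but (i) forcing a chosen component $P_k$ to run in the $t$-mode, i.e., to run until \emph{no} production of $P_k$ applies, and (ii) synchronising the hand-over of control from one component to the next. First I would fix notation: let $\Gamma=(N,T,P_0,P_1,\bar P_1,\dots,P_n,\bar P_n,S')$, and for each component index $c$ introduce a component-selector nonterminal, say $\langle c\rangle$, so that exactly one selector is present in the sentential form at any time and it names the component currently allowed to rewrite. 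A production $X\to y$ of component $P_c$ with permitting symbol $Z$ (or with empty permitting set) is then turned into the degree-$(1,1)$ production $(X\to y,\langle c\rangle,0)$ together with an auxiliary check that $Z$ is present --- and here one uses that the forbidding slot can be used to \emph{encode the permitting test indirectly} only if $Z$ is unique; since in general $Z$ need not be unique in the sentential form, I would instead chain the test through two productions in the style already used in the proof of the last theorem of Section~3.1 (primed copies, marker symbols $[p_1B],[p_2B]$, etc.), so that a single application of $P_c$'s rule is expanded into a short deterministic macro-step whose only branching corresponds to a genuine choice of which $P_c$-rule to apply.

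The second and more delicate ingredient is simulating the $t$-mode: component $P_c$ must be applied a maximal number of times. I would handle this by the standard ``try to leave / verify nothing applies'' trick. When the system wants to switch away from $P_c$ to the next component $P_d$, it first rewrites the selector $\langle c\rangle$ to a tentative selector $\langle c\!\to\!d\rangle$; then, for every rule $(X\to y,Z,0)$ of $P_c$, $G'$ contains a ``trap'' production that is applicable precisely when $X$ is present and ($Z$ is present or $Z$ is absent, according to the rule's permitting condition) and the tentative selector is present --- such a production rewrites some symbol to a dead nonterminal (or loops forever, as in production~5 of Lemma~\ref{lem2}'s proof), so that a terminating derivation is forced to switch only when $P_c$ is genuinely blocked. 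Encoding ``$X$ present and $Z$ present'' as a degree-$(1,1)$ test is the one point that needs care, because it is a conjunction of two ``present'' conditions; I would resolve it, again, by a two-step protocol: mark $X$ to a primed copy $X'$ under permitting symbol $\langle c\!\to\!d\rangle$, then test for $Z$ under permitting symbol $\langle c\!\to\!d\rangle$ (or detect its absence via a forbidding slot), unmarking $X'$ back to $X$ afterwards, so that the composite macro-step is equivalent to the conjunctive test while each individual production uses at most one permitting and one forbidding symbol. If every such trap macro-step runs to completion harmlessly, the selector is finally promoted from $\langle c\!\to\!d\rangle$ to $\langle d\rangle$ and the simulation of the next component begins; the initial selector is installed by a start production $(S'\to \langle 0\rangle S, 0,0)$-type rule analogous to the initialisation in Lemma~\ref{lem2}.

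Having set this up, the correctness argument splits into the two usual inclusions. For $L(\Gamma)\subseteq L(G')$: every $t$-mode step $\alpha\Ra^t_c\beta$ of $\Gamma$ is mirrored by a block of macro-steps of $G'$ that performs the same $P_c$-applications (using the selector to gate them), followed by a successful pass through all the trap productions --- which pass exists exactly because $\beta$ admits no further $P_c$-rule --- and then the promotion to the next selector; concatenating these blocks over a successful derivation of $\Gamma$ and finally erasing the selector and the brackets yields the terminal word. For $L(G')\subseteq L(\Gamma)$: in any terminating derivation of $G'$ the selector symbols force the derivation to decompose into such blocks (any attempt to switch while a $P_c$-rule still applies triggers the non-terminating trap, and any $[X,i,j]$-style intermediate marker left behind also loops, exactly as in productions~4 and~7 of Lemma~\ref{lem2}), so the block structure is rigid and projects back onto a legal $t$-mode derivation of $\Gamma$. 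The main obstacle, as flagged above, is making the conjunctive presence-tests needed for the $t$-mode trap fit inside the degree-$(1,1)$ straitjacket; once the two-step marking protocol is in place and shown to be ``all-or-nothing'' (a partially executed macro-step must loop forever and hence cannot occur in a terminating derivation), the rest is a routine but lengthy verification, and by the same boilerplate as elsewhere in the paper one concludes ``it is not hard to see that $L(G')=L(\Gamma)=L$,'' completing the proof together with Theorem~\ref{thm1} that $\SC(1,1)=\RC$.
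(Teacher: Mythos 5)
Your overall architecture matches the paper's: start from the permitting CD grammar system of Lemma~\ref{lem2}, keep a selector naming the active component, expand each rule application into a short macro-step to fit the conjunctive presence tests into degree $(1,1)$, and insert a verification phase before control may pass to another component. However, the central step --- simulating the $t$-mode --- is unsound as you describe it. Your ``trap'' productions are \emph{applicable} exactly when some rule of $P_c$ is still applicable, and you conclude that a terminating derivation is therefore forced to switch only when $P_c$ is blocked. That inference is valid in the CD/$t$-mode world (where a component must keep working while it can, so an applicable looping rule such as production~5 of Lemma~\ref{lem2} genuinely blocks the derivation), but it fails for semi-conditional grammars: a derivation step may apply \emph{any} applicable production, so the mere existence of an applicable trap never forces its application, and the derivation can simply ignore all traps and promote $\langle c\!\to\!d\rangle$ to $\langle d\rangle$ prematurely. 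What is needed is the dual, \emph{positive} check: the promotion production must be \emph{inapplicable} until, for every rule $p_j.(A_j\to x_j,\per_j)$ of $P_c$, the grammar has certified that $A_j$ is absent, or that $A_j$ is present but some $Y\in\per_j$ is absent. Certifying all rules requires the selector to carry bookkeeping of which rules have already been checked (the paper records a subset $P\subseteq P_i\cup P_i'$ in the selector and only permits the switch once $P=P_i'$); your single tentative selector $\langle c\!\to\!d\rangle$ has no such state, so nothing forces even one check to occur.

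Two smaller points. First, your final step ``erasing the selector and the brackets'' is not available here: $\SC(1,1)$ productions have right-hand sides in $V^+$, so a free-standing selector nonterminal can never be removed; the selector must be fused with an existing symbol of the sentential form (the paper encodes the component index into the first symbol, $[X,i]$, so that the last production merely strips a bracket). Second, you must also guard against partially executed macro-steps surviving into the verification phase (a leftover marker $[x,\per]$ would make the projection back to $\Gamma$ ill-defined); the paper devotes a separate sweep (its productions~7--8) to verifying that no such intermediate symbol remains before the non-applicability check begins. With the positive, bookkeeping-based verification in place of the traps and the selector fused into a symbol, the rest of your plan goes through essentially as in the paper.
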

  \begin{proof}
    Let $L$ be a random context language, and let $\Gamma=(N,T,P_1,P_2,\dots,P_n,S)$, for some $n\ge 1$, be a permitting CD grammar system working in $t$-mode generating $L$ constructed as in Lemma \ref{lem2}. Let $V=N\cup T$. Construct the semi-conditional grammar of degree $(1,1)$ as follows. For each $(A\to x,\per)\in P_i$, recall that $|\per|\le 1$, add
    \begin{enumerate}
      \item\label{1} $(S'\to [S,i],0,0)$\\
            where $1\le i\le n$;
      \item\label{2} $(A\to [x,\per],[X,i],0)$\\
            where $X\in V$;
      \item\label{3} $([x,\per]\to x,\per,0)$;
      \item\label{4} $([x,\per]\to x,[Z,i],0)$\\
            where $\per=\{Z\}$;
      \item\label{5} $([A,i]\to [x_1,i]x_2\dots x_z,\per,0)$\\
            where $x=x_1x_2\dots x_z$, for some $z\ge 1$, $x_i\in V$, $i=1,\dots,z$;
      \item\label{6} $([X,i]\to [X,Q_i],0,0)$\\
            where $X\in V$ and $Q_i=\{[x,\per] : (A\to x,\per)\in P_i\}$;
      \item\label{7} $([X,Q]\to [X,(Q-\{q\})\cup\{q'\}],0,q)$\\
            where $X\in V$, $Q\subseteq Q_i\cup Q_i'$, $Q_i'=\{x' : x\in Q_i\}$, and $q\in Q\cap Q_i$;
      \item\label{8} $([X,Q_i']\to [X,P_i],0,0)$\\
            where $X\in V$;
      \item\label{9} $([X,P]\to [X,(P-\{p_j\})\cup\{p_j'\}],0,A_j)$\\
            where $P\subseteq P_i\cup P_i'$, $P_i'=\{x' : x\in P_i\}$, $p_j$ is the label of $(A_j\to x_j,\per_j)\in P\cap P_i$, and $X\in V-\{A_j\}$;
      \item\label{10} $([X,P]\to [X,(P-\{p_j\})\cup\{p_j'\}],A_j,Y)$ and\\ $([A_j,P]\to [A_j,(P-\{p_j\})\cup\{p_j'\}],0,Y)$\\
            where $P\subseteq P_i\cup P_i'$, $p_j$ is the label of $(A_j\to x_j,\per_j)\in P\cap P_i$, $Y\in\per_j$, and $X\in V-\{Y\}$;
      \item\label{11} $([X,P_i']\to [X,j],0,0)$\\
            where $X\in V$ and $j\in\{1,2,\dots,n\}$.
      \item\label{12} $([x,P_i']\to x,0,0)$\\
            where $x\in T$;
    \end{enumerate}

    Let $G=(N',T,P',S')$ be the semi-conditional grammar of degree $(1,1)$ defined above, i.e., $P'$ is defined as described above and
    \begin{eqnarray*}
      N'=N\cup\{S'\}
          & \cup  & \{[X,i] : X\in V,\, i\in \{1,2,\dots,n\}\}\\
          & \cup  & \{[X,Q] : X\in V,\, Q\in\{Q_1,Q_2,\dots,Q_n\}, Q_i \textrm{ are defined as above}\}\\
          & \cup  & \{[X,P] : X\in V,\, P\in\{P_1,P_2,\dots,P_n\}\}\\
          & \cup  & \{[x,\per] : [x,\per]\in\bigcup_{i=1}^{n} Q_i\}.
    \end{eqnarray*}

    Informally, $G$ simulates $\Gamma$ so that it remembers the simulated component $P_i$ of $\Gamma$ in the first nonterminal, which is of the form $[X,i]$, for some $X\in V$. More specifically, productions \ref{2} to \ref{5} simulate the derivation steps of the $i$th component of $\Gamma$. Production \ref{6} starts the verification process during which none of productions \ref{2}, \ref{4}, and \ref{5} are applicable: productions constructed in \ref{7} verify that there is no symbol of the form $[x,\per]$ in the sentential form; if so, production \ref{3} is not applicable, and production \ref{8} starts to verify whether there is no applicable production in $P_i$ of $\Gamma$ (see productions constructed in \ref{9} and \ref{10}); if so, production \ref{11} changes the simulated component, or production \ref{12} finishes the derivation.

    Formally, to prove that $L(\Gamma)\subseteq L(G)$, consider a successful derivation of $\Gamma$. Such a derivation is of the form $S\Ra^t \alpha_1 \Ra^t \alpha_2 \Ra^t \dots \Ra^t \alpha_k$, where $\alpha_k\in T^*$, for some $k\ge 1$. Assume that $\alpha_m \Ra^t_i \alpha_{m+1}$ by productions from $P_i$, for some $1\le i\le n$ and $1\le m<k$. Let $\alpha_m=z_1z_2\dots z_\ell$ and $\alpha_{m+1}=y_1y_2\dots y_{\ell'}$, where $z_s,y_t\in V$ for all $s=1,2,\dots,\ell$ and $t=1,2,\dots,\ell'$. As the derivation of $G$ starts by the application of a production constructed in \ref{1}, i.e., the sentential form is of the form $[S,i]$, for some $1\le i\le n$, assume that $[z_1,i]z_2\dots z_\ell$ is the current sentential form of $G$. Then, if the rewritten symbol is the first symbol of the current sentential form of $\Gamma$, production \ref{5} is applied in $G$, and if the rewritten symbol is not the first symbol of the sentential form of $\Gamma$, production \ref{2} is applied in $G$ followed by an application of production \ref{3} or \ref{4}, where the choice depends on the permitting set. In either case, sentential forms of $\Gamma$ and $G$ modified as described above coincide except for the first symbol. However, if $x\in V$ is the first symbol of the sentential form of $\Gamma$, then $[x,i]$ is the first symbol of the sentential form of $G$, for some $1\le i\le n$. Therefore, by the corresponding derivation replacing the same symbols at the same positions as in $\Gamma$, we have that $[z_1,i]z_2\dots z_\ell\Ra^* [y_1,i]y_2\dots y_{\ell'}$ in $G$. There is no production applicable to $\alpha_{m+1}$ in $\Gamma$. Thus, production \ref{6} is applied followed by a sequence of productions constructed in \ref{7} verifying that there is no symbol of the form $[x,\per]$ in the sentential form. As there is no such symbol, production \ref{8} is applied. As no productions from $P_i$ are applicable in $\Gamma$, which means that either there is not the left-hand side of the production in the sentential form, or there is the left-hand side of the production but there is not a symbol from its permitting set in the sentential form, productions constructed in \ref{9} and \ref{10}, followed by production \ref{11}, are applicable, i.e.,
    \[\begin{array}{ccccc}
      [y_1,i]y_2\dots y_{\ell'} & \Ra   & [y_1,Q_i]y_2\dots y_{\ell'}
                                & \Ra^* & [y_1,Q'_i]y_2\dots y_{\ell'}\\
                                & \Ra   & [y_1,P_i]y_2\dots y_{\ell'}
                                & \Ra^* & [y_1,P'_i]y_2\dots y_{\ell'}\\
                                & \Ra   & [y_1,j]y_2\dots y_{\ell'}\,,
    \end{array}\]
    where $j$ is such that $\alpha_{m+1} \Ra^t_j \alpha_{m+2}$. The proof then proceeds by induction. If $m+1=k$, then production $\ref{12}$ is applied instead of production $\ref{11}$.

    To prove the other inclusion, $L(G)\subseteq L(\Gamma)$, consider a successful derivation of $G$. Such a derivation starts $S'\Ra [S,j]$, for some $1\le j\le n$. Consider a more general sentential form $[X,i]w$, for some $X\in V$, $1\le i\le n$, and $w\in (N'\cup T)^*$. To simplify the proof, denote each nonterminal symbol $[x,\per]$ by the nonterminal that has generated it. It means, if, for instance, $(A\to [x,\per],[X,i],0)$ was applied, write $[x,\per]_A$. Assume that $S\Ra^* Xf(w)$ in $\Gamma$, where $f$ is a homomorphism defined as $f([x,\per]_A)=A$, and $f(X)=X$ otherwise. Then, there are the following possibilities how to proceed the derivation:
    \begin{enumerate}
      \item If production $\ref{2}$ is applied in the successful derivation, i.e., $S'\Ra^* [X,i]uAv \Ra [X,i]u[x,\per]_Av$. Then, by the assumption, \[S\Ra^* Xf(u)Af(v) = Xf(u[x,\per]_Av)\] in $\Gamma$.

      \item\label{part2} Assume that production \ref{3} or \ref{4} is applied in the successful derivation, replacing the nonterminal $[x,\per]_A$. Then, there had to be a preceding application of a production constructed in \ref{2} in the derivation, i.e., \[S'\Ra^* [Y,i]uAv\Ra [Y,i]u[x,\per]_Av\Ra^* [X,i]u'[x,\per]_Av'\Ra [X,i]u'xv'\,,\] where $i$ is unchanged in the first nonterminals of the shown part of the derivation as proved in \ref{part4} below. By the assumption and the production $(A\to x,\per)\in P_i$, \[S\Ra^* Xf(u')Af(v') \Ra Xf(u')xf(v')\] because $\per \subseteq \aalph(Xu'v')\cap N \subseteq \aalph(Xf(u'v'))$.

      \item If production \ref{5} is applied in the successful derivation, $[X,i]w\Ra [x_1,i]x_2\dots x_kw$, then \[S\Ra^* Xf(w) \Ra x_1x_2\dots x_kf(w)\] by the production $(X\to x_1x_2\dots x_k,\per)\in P_i$.

      \item\label{part4} Finally, assume that production \ref{6} is applied in the successful derivation. Then, only productions constructed in \ref{7} and \ref{3} are applicable, followed by an application of production \ref{8}, i.e., \[[X,i]\bar{w}\Ra [X,Q_i]\bar{w}\Ra^* [X,Q_i']w\Ra [X,P_i]w\,.\] However, each of the productions constructed in \ref{7} primes a symbol $[y,\per]\in Q_i$ only if there is no nonterminal symbol $[y,\per]$ in the current sentential form. Therefore, after this part of the derivation, it is verified that $w\in V^*$, which implies that any application of a production constructed in \ref{2} is followed by an application of a production constructed in \ref{3} or \ref{4} before production \ref{8} is applied. By the assumption and the argument analogous to the argument in \ref{part2} above, \[S\Ra^* Xf(\bar{w}) \Ra^* Xw\,.\] Then, only productions constructed in \ref{9} and \ref{10} are applicable, i.e., \[[X,P_i]w\Ra^* [X,P_i']w\,.\] More specifically, if production \ref{9} is applied, then $A_j$ does not occur in the sentential form $Xw$, which implies that the production $p_j.(A_j\to x_j,\per_j)\in P_i$ is not applicable in $\Gamma$. On the other hand, if production \ref{10} is applied, then $A_j$ occurs in the current sentential form, but some $Y\in \per_j$ does not. Again, the production $p_j.(A_j\to x_j,\per_j)\in P_i$ is not applicable in $\Gamma$. As all productions of $P_i$ are checked by this part of the derivation, it is verified that there is no production in $P_i$ applicable by $\Gamma$. Then, production \ref{11} is applied, which $\Gamma$ simulates by changing the component.

      If production \ref{12} is applied, then no production constructed in $\ref{2}$ is applicable, which implies that $Xw\in T^*$, and the derivation is successfully finished.
    \end{enumerate}
    As, in all cases, the sentential form is of the form $[Y,j]w'$, for some $Y\in V$, $1\le j\le n$, and $w'\in (N'\cup T)^*$. The proof proceeds by induction.
  \end{proof}

  Thus, we have proved that the family of random context languages and the family of semi-conditional languages of degree $(1,1)$ coincide.
  \begin{corollary}
    {\bf $\RC=\SC(1,1)$}.
  \end{corollary}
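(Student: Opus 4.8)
The plan is simply to combine the two inclusions that have already been established in this section. Theorem~\ref{thm1} gives $\SC(1,1)\subseteq\RC$, and Theorem~\ref{thm3} gives $\RC\subseteq\SC(1,1)$; putting the two together yields $\RC=\SC(1,1)$, which is exactly the claim. So at the level of the corollary itself there is essentially nothing left to do — it is a one-line consequence of the preceding theorems.

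The substance of the argument therefore lies entirely in the two theorems feeding into it. For $\SC(1,1)\subseteq\RC$ (Theorem~\ref{thm1}) the key idea is that the only difference between the two formalisms is whether the rewritten symbol $A$ itself is allowed to satisfy the permitting/forbidding conditions (definition~(\ref{def2}) versus~(\ref{def1})): one removes productions whose forbidding set contains $A$, deletes $A$ from permitting sets, and, since semi-conditional permitting/forbidding strings of degree $(1,1)$ are single symbols, each such string is already a one-element set, so the resulting device is literally a random context grammar. A homomorphism priming terminal symbols takes care of the subtlety that permitting/forbidding sets of a random context grammar may only contain nonterminals. For the reverse inclusion $\RC\subseteq\SC(1,1)$ (Theorem~\ref{thm3}) the route goes through CD grammar systems: first normalize the random context grammar so that no production forbids its own left-hand side (Lemma~\ref{lem1}), then realize the language by a permitting CD grammar system with singleton permitting sets (Lemma~\ref{lem2}), and finally simulate such a system by a semi-conditional grammar of degree $(1,1)$, encoding the active component in a single ``first'' nonterminal $[X,i]$ and using the one-symbol permitting and forbidding conditions both to simulate a step of component $P_i$ and to verify, when that component is exhausted, that no production of $P_i$ applies so that control may pass to another component.

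The main obstacle is not the corollary but the direction $\RC\subseteq\SC(1,1)$, and within it the verification phase of the simulation in Theorem~\ref{thm3}: one must check, using only single-symbol tests, both that no partially-rewritten marker $[x,\per]$ remains in the sentential form and that no production of the current component is enabled (i.e.\ for each production either its left-hand side is absent or its unique permitting symbol is absent). Getting the bookkeeping of the marker states $[X,Q]$ and $[X,P]$ right so that these checks cannot be circumvented, and so that a failed check forces the derivation to loop forever rather than to cheat, is the delicate part. Since all of that is carried out in Theorems~\ref{thm1} and~\ref{thm3}, the corollary follows immediately.
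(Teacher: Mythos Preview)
Your proposal is correct and matches the paper exactly: the corollary is stated there with no separate proof, as it is the immediate conjunction of Theorem~\ref{thm1} ($\SC(1,1)\subseteq\RC$) and Theorem~\ref{thm3} ($\RC\subseteq\SC(1,1)$). Your additional summary of how those two theorems work is accurate and faithful to the paper's constructions.
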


  \subsection{Normal forms of random context grammars}
  This section discusses the normal forms of random context grammars. Specifically, it proves that the normal form proved by Mayer in \cite{mayer} for random context grammars with erasing productions holds for random context grammars in general. It means that it holds for random context grammars without erasing productions, too.

  \begin{definition}
    A random context grammar $G=(N,T,P,S)$ is called {\em production\mbox{-}limited} if every production from $P$ is of one of the following three forms:
    \begin{enumerate}
      \item $(A\to BC, \per,\for)$
      \item $(A\to B, \per,\for)$
      \item $(A\to a, \emptyset,\emptyset)$
    \end{enumerate}
    where $A,B,C\in N$, $a\in T$, and $\per,\for\subseteq N$.
  \end{definition}

  \begin{definition}
    A random context grammar $G=(N,T,P,S)$ is called {\em limited} if it is production-limited and, in addition, each $\per,\for\subseteq N$ is either empty or a one element set.
  \end{definition}

  Mayer \cite[Theorem 6]{mayer} proved that if erasing productions are allowed, then each recursively enumerable language can be generated by a limited random context grammar. In the nonerasing case, however, he only proved (see \cite[Lemmas 7 and 8]{mayer}) that every random context language can be generated by a production-limited random context grammar, and it was left open whether the same normal form also holds for random context grammars without erasing productions. The following corollary answers this question.

  \begin{corollary}\label{nf-cor}
    Every random context language can be generated by a limited random context grammar.
  \end{corollary}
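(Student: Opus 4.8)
The plan is to exploit the equality $\RC=\SC(1,1)$ established above. The only genuinely non-trivial requirement of the limited normal form is that every permitting and forbidding set have cardinality at most one, and a semi-conditional grammar of degree $(1,1)$ already has this property; so it will remain only to reshape such a grammar into production-limited form without re-enlarging the context sets, which is a routine binarization.

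Concretely, rather than invoking the abstract equality $\RC=\SC(1,1)$, I would start from the grammar $G$ actually constructed in the proof of Theorem~\ref{thm3}. Following that construction back through Lemma~\ref{lem2} (whose permitting CD grammar system has every permitting set empty or a singleton consisting of a nonterminal) and through Lemma~\ref{lem1}, one verifies that every permitting string and every forbidding string attached to a production of $G$ is either empty or a single nonterminal of $G$. Thus $G$ is a semi-conditional grammar of degree $(1,1)$, using the definition~$(\ref{def2})$, whose context conditions are all single nonterminals; its only departures from ``limited'' form are that right-hand sides may be longer than two symbols and that it uses~$(\ref{def2})$ rather than~$(\ref{def1})$.

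Next I would pass from~$(\ref{def2})$ to~$(\ref{def1})$ exactly as in the proof of Corollary~\ref{sc2->1}, i.e.\ by applying clauses~2 and~3 of the construction in Theorem~\ref{thm1}: delete every production $(A\to x,\per,\for)$ with $A\in\for$, and in the remaining ones replace $\per$ by $\per-\{A\}$. Since these operations only delete productions or shrink context sets, the resulting equivalent grammar $G_1$ still has every context empty or a single nonterminal, and, using~$(\ref{def1})$ with context sets of size at most one, $G_1$ is literally a random context grammar. Finally I would binarize $G_1$ by the standard construction underlying Mayer's Lemmas~7 and~8: replace each production $(A\to x_1x_2\cdots x_z,\per,\for)$ with $z\ge 2$ by $(A\to\br{x_1}\br{x_2\cdots x_z},\per,\for)$ together with the empty-context productions $(\br{x_i\cdots x_z}\to\br{x_i}\br{x_{i+1}\cdots x_z},\emptyset,\emptyset)$, identifying $\br{x_i}$ with $x_i$ when $x_i$ is a nonterminal and adding $(\br{x_i}\to x_i,\emptyset,\emptyset)$ when $x_i$ is a terminal; and replace each production $(A\to a,\per,\for)$ that has a nonempty context by $(A\to\bar a,\per,\for)$ and $(\bar a\to a,\emptyset,\emptyset)$. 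Each newly introduced nonterminal occurs nowhere else and can be rewritten in exactly one way, by an empty-context production, so this transformation preserves the generated language; since no context set is ever enlarged, the outcome is a production-limited random context grammar with all contexts of size at most one, i.e.\ a limited random context grammar generating $L$.

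The step that needs the most care is the first one. The abstract identity $\RC=\SC(1,1)$ alone is not enough: a general semi-conditional grammar of degree $(1,1)$ may carry terminal symbols in its context strings, and removing a terminal from a context of a random context grammar would in general require a clean-up production with a large forbidding set, which is incompatible with the normal form. The remedy is precisely to extract the stronger property---that all contexts are single nonterminals---from the explicit grammar built for Theorem~\ref{thm3}, and then to be careful that neither the passage from~$(\ref{def2})$ to~$(\ref{def1})$ nor the binarization ever merges two context symbols into a single set.
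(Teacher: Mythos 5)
Your route through the explicit grammars of Lemma~\ref{lem2}, Theorem~\ref{thm3} and Corollary~\ref{sc2->1} is the same one the paper uses, and your observation that all contexts coming out of that pipeline are empty or single nonterminals is correct. The genuine gap is in your last step: the ``routine binarization'' by chain productions with empty contexts is \emph{not} language-preserving for random context grammars. While an intermediate symbol $\br{x_i\cdots x_z}$ is still unexpanded, the symbols $x_i,\dots,x_z$ are invisible to the permitting and forbidding checks of every other production, so a production that is blocked in $G_1$ can fire in the binarized grammar. Concretely, take $G_1$ with productions $(S\to ABD,\emptyset,\emptyset)$, $(A\to a,\emptyset,\{B\})$, $(B\to b,\emptyset,\{D\})$, $(D\to d,\emptyset,\{A\})$: every production is blocked at $ABD$, so $L(G_1)=\emptyset$. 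Your construction yields $S\to A\br{BD}$ and $\br{BD}\to BD$ with empty contexts, and then $A\br{BD}\Ra a\br{BD}\Ra aBD\Ra aBd\Ra abd$ is a successful derivation, since $B$ is hidden when $A\to a$ is applied. So $L(G_2)=\{abd\}\neq L(G_1)$. This failure mode is present in the actual grammar you would be binarizing: for instance the productions constructed in clause~\ref{9} of Theorem~\ref{thm3} forbid a nonterminal $A_j$ that can occur inside a long right-hand side, and hiding it in a chain symbol would let the verification phase succeed spuriously. Fixing the binarization (e.g.\ by locking the derivation with forbidding sets while a chain is expanded) would re-enlarge the context sets, which is exactly what the normal form must avoid.

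This is why the paper runs the reductions in the opposite order: it starts from a \emph{production-limited} random context grammar, which Mayer's Lemmas~7 and~8 already provide in the nonerasing case, and then checks that the constructions of Lemma~\ref{lem2}, Theorem~\ref{thm3} and Corollary~\ref{sc2->1} preserve the shapes $(A\to BC,\per,\for)$, $(A\to B,\per,\for)$, $(A\to a,\emptyset,\emptyset)$ while shrinking every context to at most one nonterminal. To repair your argument you should either adopt that order, or restrict your final binarization to right-hand sides whose hidden symbols provably never occur in any permitting or forbidding set of the grammar being transformed --- a property you would have to extract from the construction of Theorem~\ref{thm3}, not assume as routine.
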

  \begin{proof}
    Given a production-limited random context grammar, the sequence of applications of constructions of Lemma \ref{lem2}, Theorem \ref{thm3}, and Corollary \ref{sc2->1}, respectively, preserves the required form of productions. The resulting grammar is random context because there are no terminal symbols in permitting and forbidding sets. In addition, each of these sets is either empty or contains only one element.
  \end{proof}

\section{Conclusion}
  This section summarizes the results and open problems concerning random context grammars and semi-conditional grammars. In what follows, the superscript $\eps$ is added if erasing productions are allowed.

  \begin{theorem} The following holds for grammars with erasing productions. The proofs can be found in \cite{bordihn94accepting,lata2009,mayer,paun2}.
    \begin{enumerate}
      \item\label{p1} ${\bf SC^{\eps}(0,0) = CF}$.
      \item\label{p2} ${\bf CF\subset SC^{\eps}(0,1) \subseteq F^{\eps} \subset REC}$.
      \item\label{p3} ${\bf CF\subset SC^{\eps}(1,0) \subseteq P^{\eps} \subset REC}$.
      \item\label{p6} ${\bf SC^{\eps}(1,1)=RE}$.
    \end{enumerate}
  \end{theorem}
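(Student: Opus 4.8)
The plan is to obtain items \ref{p1}--\ref{p3} from elementary observations about the definitions together with the separation results cited in the statement, and to reduce item \ref{p6} to a Chomsky type-0 simulation. For item \ref{p1}, note that in degree $(0,0)$ the index sets $\bigcup_{k=1}^{0}V^{k}$ are empty, so every permitting and every forbidding component of a production is forced to be $\emptyset$ and the relation $(\ref{def2})$ collapses to ordinary context-free rewriting; hence $\SC^{\eps}(0,0)$ is literally the family of languages of $\eps$-context-free grammars, i.e.\ $\CF$. The same triviality gives the left inclusions $\CF\subseteq\SC^{\eps}(0,1)$ and $\CF\subseteq\SC^{\eps}(1,0)$, a context-free grammar being a semi-conditional grammar all of whose conditions are empty. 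To see these are proper, I would exhibit one non-context-free witness in each family, e.g.\ a degree-$(1,0)$ grammar generating $\{a^{n}b^{n}c^{n}:n\ge1\}$ by cyclically rewriting three markers, each permitting-conditioned on the presence of the next marker, and, dually, a degree-$(0,1)$ grammar for the same language using forbidding conditions; both constructions are routine.

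For the middle inclusions $\SC^{\eps}(0,1)\subseteq\For^{\eps}$ and $\SC^{\eps}(1,0)\subseteq\Per^{\eps}$, observe that a permitting or forbidding \emph{string} of length at most one is just a single symbol, so the only discrepancy between such a semi-conditional grammar and a genuine forbidding resp.\ permitting grammar is that $(\ref{def2})$ counts the rewritten occurrence whereas $(\ref{def1})$ does not; this is removed by exactly the device of Theorem \ref{thm1} and Corollary \ref{sc2->1}, namely splitting $(A\to x,\per,\for)$ into $(A\to A',\emptyset,\{X':X\in N\})$ followed by $(A'\to x,\per,\for)$ with a fresh primed $A'$, and discarding the production outright when its forbidding symbol is $A$ itself. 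The outer facts $\For^{\eps}\subset{\bf REC}$ and $\Per^{\eps}\subset{\bf REC}$, strictness included, I would simply invoke from the stated references (Bordihn and Fernau \cite{bordihn94accepting}); the point worth flagging is that, in contrast to degree $(1,1)$ below, here erasing productions do \emph{not} push the family up to ${\bf RE}$.

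Item \ref{p6}, $\SC^{\eps}(1,1)={\bf RE}$, is the substantive one, due to Mayer \cite{mayer} (see also \cite{lata2009,paun2}). The inclusion $\SC^{\eps}(1,1)\subseteq{\bf RE}$ is clear since a semi-conditional grammar is a particular type-0 device, so all the work lies in ${\bf RE}\subseteq\SC^{\eps}(1,1)$. I would start from a type-0 grammar in a Geffert/Penttonen-style normal form---context-free productions plus a few length-decreasing ones such as $AB\to\eps$---and simulate each application of a non-context-free production by a short locked sequence of context-free productions: one production plants a unique marker; its size-one forbidding condition guarantees that no competing rewriting is in progress; its size-one permitting condition certifies that the intended partner symbol is actually adjacent, this adjacency certificate being propagated along the sentential form by auxiliary barred copies; a rewriting production then fires; and a final production erases the marker together with the barred copies. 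The hard part---here and in any reproof---is engineering this adjacency and consistency check so that the two single-symbol conditions of degree $(1,1)$ cannot be satisfied accidentally by a stray occurrence elsewhere in the string; erasing productions are precisely what make this possible, since the auxiliary barred symbols must be cleaned up afterwards. In sum, the theorem is a compilation: items \ref{p1}--\ref{p3} are immediate or are the cited separations, and item \ref{p6} is the cited type-0 simulation.
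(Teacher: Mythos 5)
The paper does not actually prove this theorem: it is a summary of known results, and the paper's ``proof'' consists of the citations \cite{bordihn94accepting,lata2009,mayer,paun2}. Your plan --- deriving items \ref{p1}--\ref{p3} from the definitions plus the cited separations, and deferring item \ref{p6} to the cited type-0 simulation --- is therefore in the same spirit and is mostly sound. Item \ref{p1} is indeed immediate because $\bigcup_{k=1}^{0}V^{k}=\emptyset$ forces every condition to be empty; the left inclusions in items \ref{p2} and \ref{p3} are trivial, with properness witnessed by standard non-context-free examples; and the outer strict inclusions in ${\bf REC}$ as well as $\SC^{\eps}(1,1)={\bf RE}$ are legitimately taken from the references.

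The one step that does not go through as you state it is $\SC^{\eps}(1,0)\subseteq\Per^{\eps}$. You claim both middle inclusions follow from ``exactly the device of Theorem~\ref{thm1} and Corollary~\ref{sc2->1},'' but that device manufactures productions with nonempty \emph{forbidding} sets: the definition-switching pair uses $(A\to A',\emptyset,\{X':X\in N\})$ and the terminal-priming cleanup uses $(a'\to a,\emptyset,N)$. Applied to a degree-$(1,0)$ grammar this yields something that is not a permitting grammar. For $\SC^{\eps}(0,1)\subseteq\For^{\eps}$ the device is harmless, since all auxiliary conditions are themselves forbidding conditions. For the permitting inclusion you need a different argument on both counts. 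The definition discrepancy is in fact easier here: if $\per=\{A\}$ then condition $(\ref{def2})$ is vacuously satisfied, so one simply replaces $\per$ by $\per-\{A\}$, with no splitting and hence no forbidding set. But you must still handle terminals occurring as permitting symbols, which a permitting grammar (with $\per\subseteq N$) cannot test directly; the paper's trick of keeping terminals as primed nonterminals until the end relies on the forbidding set $N$ to release them. So either supply a forbidding-free treatment of terminal conditions (keeping terminals as nonterminal stand-ins and releasing them with unconditional productions, which then requires an explicit two-sided language-equality argument), or cite this middle inclusion from the references as well rather than asserting that it follows from Theorem~\ref{thm1}.
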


  \begin{theorem}\label{thm4} The following holds for grammars without erasing productions. The proofs can be found in \cite{bordihn94accepting,waltP,paun2,waltF}. The first part 5 is proved in this paper.
    \begin{enumerate}
      \item\label{pp1} ${\bf SC(0,0) = CF}$.
      \item\label{pp2} ${\bf CF\subset SC(0,1) \subseteq F \subset RC}$.
      \item\label{pp3} ${\bf CF\subset SC(1,0) \subseteq P \subset RC}$.
      \item\label{pp4} ${\bf SC(2,1) = SC(1,2) = CS}$.
      \item ${\bf SC(1,1) = RC \subset CS}$.
    \end{enumerate}
  \end{theorem}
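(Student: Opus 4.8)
The plan is to treat parts 1--4 of Theorem \ref{thm4} as recorded facts and to obtain part 5 from results already established in this paper. For parts 1--4: part 1 is immediate, since a semi-conditional grammar of degree $(0,0)$ can only carry empty permitting and forbidding sets and is therefore an ordinary context-free grammar; parts 2 and 3 follow by observing that a single-symbol forbidding (resp.\ permitting) string behaves exactly like a forbidding (resp.\ permitting) set, so that $\SC(0,1)\subseteq\For$ and $\SC(1,0)\subseteq\Per$, and then invoking the strict inclusions $\CF\subset\For\subset\RC$ and $\CF\subset\Per\subset\RC$ from \cite{bordihn94accepting,waltP,waltF}; part 4 is P{\u a}un's characterization $\SC(2,1)=\SC(1,2)=\CS$ from \cite{paun2}. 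None of these requires new work.

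For part 5 I would argue as follows. First combine Theorem \ref{thm1}, which gives $\SC(1,1)\subseteq\RC$, with Theorem \ref{thm3}, which gives $\RC\subseteq\SC(1,1)$, to obtain the equality $\SC(1,1)=\RC$ (this is precisely the corollary stated just after Theorem \ref{thm3}). Then append the known proper inclusion $\RC\subset\CS$, established in \cite{dapa} and recalled at the start of the Results section, to conclude $\SC(1,1)=\RC\subset\CS$, which is the assertion of part 5.

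The only nontrivial ingredient is the equality $\SC(1,1)=\RC$, and within it the direction $\RC\subseteq\SC(1,1)$ of Theorem \ref{thm3}, which in turn rests on Lemma \ref{lem2}. I expect the main obstacle to be the faithful simulation of a $t$-mode permitting CD grammar system by a degree-$(1,1)$ semi-conditional grammar: using only length-one permitting and forbidding checks, the constructed grammar must be able to recognize, before it is allowed to switch to the next component, both that every intermediate ``bracketed'' symbol $[x,\per]$ has been rewritten and that no production of the currently simulated component remains applicable in $\Gamma$. This is accomplished by the staged priming/unpriming productions of clauses \ref{6}--\ref{11} in the proof of Theorem \ref{thm3}, and the delicate point is to verify that no alternative, unintended derivation can evade these verification phases; that verification, together with the routine induction on the number of $t$-mode steps carried out in both directions, completes the argument.
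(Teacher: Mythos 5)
Your proposal matches the paper's treatment exactly: parts 1--4 are cited from the literature, and part 5 is obtained by combining Theorem \ref{thm1} ($\SC(1,1)\subseteq\RC$) with Theorem \ref{thm3} ($\RC\subseteq\SC(1,1)$) and appending the known proper inclusion $\RC\subset\CS$ from \cite{dapa}. Your identification of Lemma \ref{lem2} and the verification phases in the proof of Theorem \ref{thm3} as the technical core is also exactly where the paper places the work.
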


  The generative power of semi-conditional grammars of degree $(0,i)$ and $(i,0)$ (with or without erasing productions), for $i\ge 2$, is not known. However, if more than one forbidding string is allowed to be attached to a production (i.e., there are sets of forbidding strings instead of only one string), it is known that such grammars (referred to as {\em generalized forbidding grammars}) are computationally complete. In addition, it is sufficient to have no more than four forbidding strings each of which is of length one or two to characterize the family of recursively enumerable languages (see~\cite[Corollary~6]{dcfs}). On the other hand, however, the question of what is the generative power of {\em generalized permitting grammars} (defined in the same manner) is an open problem.

  Let $(A\to\alpha,u,v)$ be a production of a semi-conditional grammar. If $u=v=0$, then it is said to be context-free; otherwise, it is said to be {\em conditional}. The latest descriptional complexity result showing that only a finite number of resources is needed by semi-conditional grammars to generate any recursively enumerable language is proved in \cite{ipl2007}.

  \begin{theorem}[\cite{ipl2007}]
    Every recursively enumerable language is generated by a semi-condi\-tional grammar of degree $(2,1)$ with seven conditional productions and eight nonterminals.
  \end{theorem}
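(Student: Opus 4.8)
The plan is to derive this from a computationally complete normal form with as few non-context-free productions as possible. Note first that erasing productions are indispensable here: by Theorem~\ref{thm4}, semi-conditional grammars of degree $(2,1)$ \emph{without} erasing productions only reach $\CS$, so the grammar in the statement must be an erasing one. I would therefore start from Geffert's normal form for phrase-structure grammars: every recursively enumerable language equals $L(G)$ for a grammar $G$ whose only non-context-free productions are two erasing rules of the shape $AB\to\eps$ and $CD\to\eps$, all remaining productions being context-free with left-hand side the start symbol, and whose nonterminal alphabet has a fixed small size. The context-free productions of $G$ are copied verbatim into the semi-conditional grammar as productions $(X\to x,0,0)$, which do not count towards the bound on conditional productions; so the whole conditional-production budget is spent on simulating the two erasing rules.

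To simulate $AB\to\eps$ I would use a short cascade of conditional productions: first rewrite some occurrence of $A$ to a marker, using a forbidding string of length one that forbids this marker and hence guarantees that at most one cascade is running; then rewrite to a second marker the occurrence of $B$ that lies immediately to the right of the first marker, using a permitting string of length two to certify the adjacency (this is exactly where degree $(2,1)$ is needed) together with a length-one forbidding string for uniqueness; and finally erase the two markers in order, each erasure guarded by a permitting string of length two or a forbidding string of length one, so that the cascade can neither be interleaved with anything else nor terminate half-finished. The rule $CD\to\eps$ is handled by an analogous cascade. The delicate, numeric part of the argument — and the reason the statement cites the exact figures \emph{seven} conditional productions and \emph{eight} nonterminals rather than the naive larger counts — is to let the two cascades share marker nonterminals, and where the possible derivations permit it a common erasing production, while checking that the shared gadget still forces each cascade to realize precisely one application of the corresponding erasing rule.

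Correctness then splits into the two standard inclusions. For $L(G)\subseteq L(G')$ one replays a derivation of $G$ step by step: context-free steps transfer directly, and an application of $AB\to\eps$ (resp.\ $CD\to\eps$) is simulated by the corresponding cascade, which is enabled precisely because the two symbols are adjacent in the current sentential form — exactly the condition tested by the length-two permitting string. For $L(G')\subseteq L(G)$ one argues by induction on the length of a successful derivation of $G'$ that every sentential form lying on a terminating derivation is either a sentential form of $G$ or a sentential form of $G$ carrying exactly one partially executed cascade, and that in the latter case the forbidding conditions leave $G'$ no option but to run that cascade to completion, after which the net effect on the unmarked symbols is exactly one application of $AB\to\eps$ or $CD\to\eps$. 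The main obstacle is the one typical of tight descriptional-complexity results: ruling out "garbage" derivations — marking an $A$ whose right neighbour is not a $B$, opening a second cascade before the first has closed, or leaving a marker behind — while simultaneously keeping the number of conditional productions and nonterminals down to seven and eight, which is precisely what makes the careful sharing of markers between the two cascades unavoidable.
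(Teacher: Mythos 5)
This theorem is quoted from \cite{ipl2007}; the paper under review contains no proof of it, so your proposal can only be measured against the construction in that reference. Your overall strategy is the right one and is essentially the one used there: note that erasing productions are unavoidable (by Theorem~\ref{thm4}, nonerasing $\SC(2,1)=\CS$), start from a Geffert normal form in which the only non-context-free productions are a constant number of erasing rules over a small fixed nonterminal alphabet, copy the context-free productions with empty conditions so that they do not count as conditional, and spend the whole conditional budget on marker cascades in which length-two permitting strings enforce adjacency and length-one forbidding symbols enforce uniqueness of the markers.

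Two things nevertheless keep this from being a proof. First, the statement is a descriptional-complexity bound, so its entire content is the pair of numbers $(7,8)$; you explicitly defer ``the delicate, numeric part''---actually exhibiting the productions and verifying that the shared markers still force exactly one application of the simulated erasing rule---and that deferred part \emph{is} the theorem. The counts come out naturally if one starts from the Geffert variant with the single erasing rule $ABC\to\eps$ over $\{S,A,B,C\}$ and adds four marker nonterminals (giving $4+4=8$) with seven conditional productions; with your choice of the two-rule variant $AB\to\eps$, $CD\to\eps$ it is not at all clear that sharing markers between two cascades gets the totals down to $7$ and $8$ rather than something larger, so the normal form you picked already puts the advertised numbers in doubt. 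Second, and more substantively, the step ``rewrite to a second marker the occurrence of $B$ that lies immediately to the right of the first marker, using a permitting string of length two to certify the adjacency'' is not an operation the formalism supports: by definition~(\ref{def2}), a permitting string only asserts that it occurs \emph{somewhere} in the sentential form, so a production such as $(B\to B',A'B,B')$ may rewrite any occurrence of $B$, not necessarily the one adjacent to $A'$. The standard repair, which your sketch gestures at under ``garbage derivations'' but never carries out, is that a wrong choice is detected only downstream: the later erasing productions are guarded by length-two permitting strings over the markers themselves (e.g.\ $A'B'$), so a derivation that primes a non-adjacent $B$ deadlocks with an unremovable marker and contributes nothing to the language. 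Until the seven productions are written down and this analysis is performed for them, the bound is not established.
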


  Finally, Example 4.1.1 in \cite{dapa} shows that there is no bound on the number of nonterminals for random context grammars. (The proof works for semi-conditional grammars of degree $(1,1)$ where terminals are not allowed to appear as permitting or forbidding symbols, too.) More specifically, the example shows that any random context grammar generating the language \[T_n=\bigcup_{i=1}^n \{a_i^j : j\ge 1\}\] requires, in the nonerasing case, exactly $n+1$ nonterminals and, in the erasing case, at least $f(n)$ nonterminals, for some unbounded mapping $f:N\to N$.

  In the case of semi-conditional grammars, terminal symbols are allowed to appear as both permitting and forbidding symbols. As $G=(\{S,A\},\{a_1,a_2,\dots,a_n\},P,S)$, where \[P=\{(S\to a_iA,0,0), (S\to a_i,0,0), (A\to a_iA,a_i,0), (A\to a_i,a_i,0) : 1\le i\le n\}\,,\] is a semi-conditional grammar of degree $(1,0)$ generating $T_n$, the question of whether analogous descriptional complexity results can be achieved for semi-conditional grammars of degree $(1,1)$ is open.

\bibliographystyle{plain}
\bibliography{masopust}
\end{document}